\newcommand{\ih}{induction hypothesis\xspace}
\newcommand{\tom}{\textsf{Tom}\xspace}
\newcommand{\Java}{\textsf{Java}\xspace}
\newcommand{\java}{\Java}
\newcommand{\FF}{\ensuremath{\Sigma}\xspace}
\newcommand{\TT}{\ensuremath{\mathcal{T}}\xspace}
\newcommand{\XX}{\ensuremath{\mathcal{X}}\xspace}
\newcommand{\VV}{\ensuremath{\mathcal{V}}\xspace}
\newcommand{\FV}{\ensuremath{\FF_v}\xspace}
\newcommand{\caD}{\ensuremath{\mathcal{D}}\xspace}
\newcommand{\caS}{\ensuremath{\mathcal{S}}\xspace}
\newcommand{\caF}{\ensuremath{\mathcal{F}}\xspace}
\newcommand{\TFX}{\ensuremath{\TT(\caF,\XX)}\xspace}
\newcommand{\TFVX}{\ensuremath{\TT(\caF \cup \caF_v,\XX)}\xspace}
\newcommand{\caC}{\ensuremath{\mathcal{C}}\xspace}
\newcommand{\CC}{\ensuremath{C}\xspace}
\newcommand{\VC}{\ensuremath{\VV(\CC)}\xspace}
\newcommand{\xx}{\ensuremath{x^*}\xspace}
\newcommand{\TYDV}{\ensuremath{\TT_{ype}(\caD \cup \{wt\},\VV)}\xspace}
\newcommand{\NN}{\ensuremath{\mathbb{N}}\xspace}
\newcommand{\ZZ}{\ensuremath{\mathbb{Z}}\xspace}
\newcommand{\match}{\mathrel{\mbox{$\prec\hspace{-0.5em}\prec$}}}
\newcommand{\TVar}{{\small \textsc{T-Var}}}
\newcommand{\TSVar}{{\small \textsc{T-SVar}}}
\newcommand{\TFun}{{\small \textsc{T-Fun}}}
\newcommand{\TEmpty}{{\small \textsc{T-Empty}}}
\newcommand{\TElem}{{\small \textsc{T-Elem}}}
\newcommand{\TMerge}{{\small \textsc{T-Merge}}}
\newcommand{\Sub}{{\small \textsc{Sub}}}
\newcommand{\Gen}{{\small \textsc{Gen}}}
\newcommand{\TMatch}{{\small \textsc{T-Match}}}
\newcommand{\TConj}{{\small \textsc{T-Conj}}}
\newcommand{\TRule}{{\small \textsc{T-Rule}}}
\newcommand{\CTVar}{{\small \textsc{CT-Var}}}
\newcommand{\CTSVar}{{\small \textsc{CT-SVar}}}
\newcommand{\CTFun}{{\small \textsc{CT-Fun}}}
\newcommand{\CTEmpty}{{\small \textsc{CT-Empty}}}
\newcommand{\CTElem}{{\small \textsc{CT-Elem}}}
\newcommand{\CTStar}{{\small \textsc{CT-Star}}}
\newcommand{\CTMerge}{{\small \textsc{CT-Merge}}}
\newcommand{\CTMatch}{{\small \textsc{CT-Match}}}
\newcommand{\CTConj}{{\small \textsc{CT-Conj}}}
\newcommand{\CTRule}{{\small \textsc{CT-Rule}}}
\newcommand{\sub}{\mathrel{\mbox{{\footnotesize $<:$}}}}
\newcommand{\transsub}{\mathrel{\mbox{$\sub\hspace{-1em}\footnotesize{^*}~~$}}}
\newcommand{\wt}[0]{{\ensuremath{wt}}}
\newcommand{\ctvdash}{\ensuremath{\vdash_{ct}}}
\newcommand{\sortof}[2]{\texttt{sortOf}\ensuremath{(#1,#2)}}
\newcommand{\sig}[2]{\ensuremath{{#1}\rightarrow{#2}}}
\newcommand{\vsig}[2]{\ensuremath{{#1}^*\rightarrow{#2}}}
\newcommand{\ie}{\textit{i.e.}}
\def\vs{\vspace{-1.5mm}}
\newtheorem{thm}{Theorem}[section]
\newtheorem{defn}[thm]{Definition}
\newtheorem{exmp}[thm]{Example}
\title{A Type System for \tom}
\author{Claude Kirchner
\institute{INRIA \\
Centre de Recherche \\INRIA Bordeaux - Sud-Ouest \\
  351, cours de la Lib\'eration, \\
  33405 Talence Cedex France}
\email{Claude.Kirchner@inria.fr}
\and
Pierre-Etienne Moreau
\quad\quad 
Cl\'audia Tavares\thanks{This work was partially supported by The Capes
Foundation, Ministry of Education of Brazil. Cx. postal 365, Bras\'ilia DF
70359-970, Brazil.}
\institute{INRIA \& LORIA\\
615 rue du Jardin Botanique, CS 20101\\
54603 Villers-l\`es-Nancy Cedex France}
\email{Pierre-Etienne.Moreau@loria.fr
\quad\quad
Claudia.Tavares@loria.fr}
}
\begin{document}
\maketitle

\begin{abstract}
  Extending a given language with new dedicated features is a general and quite used
  approach to make the programming language more adapted to problems. Being closer to the
  application, this leads to less programming flaws and easier maintenance.  But of course
  one would still like to perform program analysis on these kinds of extended languages,
  in particular type checking and inference. In this case
  one has to make the typing of the extended features compatible with the ones in the starting language. \\
  The \tom programming language is a typical example of such a situation as it consists of
  an extension of \java that adds pattern matching, more particularly associative pattern
  matching, and reduction strategies. \\
  This paper presents a type system with subtyping for
  \tom, that is compatible with \java's type
  system, and that performs both type checking and type inference.
  We propose an algorithm that checks if all patterns of a \tom program are well-typed. In
  addition, we propose an algorithm based on equality and subtyping constraints that
  infers types of variables occurring in a pattern. Both algorithms are exemplified and
  the proposed type system is showed to be sound and complete.

\end{abstract}

\sloppy


\section{Introduction of the problem: static typing in \tom}
\label{sec:tom}

We consider here the {\tom} language, which is an extension of {\java} that provides rule
based constructs. In particular, any {\java} program is a {\tom} program. We
call this kind of extension {\em formal
islands}~\cite{balland:06,balland:09} where the {\em ocean} consists of {\java} code
and the {\em island} of algebraic patterns. For simplicity, we consider here only two
new {\tom} constructs: a \verb|%match| construct and a
\verb|`| (backquote) construct.

The semantics of \verb|%match| is close to the {\em match} that exists in functional
programming languages, but in an imperative context. A \texttt{\%match} is parameterized
by a list of subjects (\ie\ expressions evaluated to ground terms) and contains a list of
rules. The left-hand side of the rules are patterns built upon constructors and fresh
variables, without any linearity restriction. The right-hand side is {\em not} a
term, but a {\java} statement that is executed when the pattern matches the
subject. However, thanks to the backquote construct (\texttt{`}) a term can be easily built and returned.
In a similar way to the standard \texttt{switch/case} construct, patterns are evaluated from top to
bottom. In contrast to the functional {\em match}, several actions ({\ie} right-hand
sides) may be fired for a given subject as long as no \texttt{return} or \texttt{break} instruction is
executed. To implement a simple reduction step for each rule, it suffices to encode the
left-hand side with a pattern and consider the {\java} statement that returns the right-hand
side.

For example, given the sort \texttt{Nat} and the function symbols \texttt{suc} and
\texttt{zero}, addition and comparison of Peano integers may be encoded as follows: \vs
$$
\begin{array}{c|c}
\begin{minipage}[t]{.43\linewidth}
\begin{footnotesize}
\begin{verbatim}
public Nat plus(Nat t1, Nat t2) {
 %match(t1,t2) {
   x,zero() -> { return `x; }
   x,suc(y) -> { return `suc(plus(x,y)); }
 }
}
\end{verbatim}
\end{footnotesize}
\end{minipage}
\quad&
\begin{minipage}[t]{.55\linewidth}
\begin{footnotesize}
\begin{verbatim}
public boolean greaterThan(Nat t1, Nat t2) {
 %match(t1, t2) {
   x,x           -> { return false; }
   suc(x),zero() -> { return true; }
   zero(),suc(y) -> { return false; }
   suc(x),suc(y) -> { return `greaterThan(x,y); }
 }
}
\end{verbatim}
\end{footnotesize}
\end{minipage}
\end{array}
$$
In this combination of an ocean language (in our case \java) and island features (in our
case abstract data types and matching), it is still an open question to perform type
checking and type inference.

Since we want to allow for type inclusion at the pattern level, the first purpose of this
paper is to present an extension of the signature definition mechanism allowing for
subtypes. In this context we define \textit{\java-like} types and signatures.
Therefore the set of types is the union of \java types and
abstract data types (i.e. \tom types) where multiple inheritance and
overloading are forbidden. For example, given the sorts \texttt{Int}$^+$,
\texttt{Int}$^-$, \texttt{Int} and \texttt{Zero}, the type system accepts the
declaration 
$\texttt{Int}^+ \sub \texttt{Int} ~\wedge~ \texttt{Int}^-  ~\sub \texttt{Int} $
but refuses the declaration 
$\texttt{Zero} \sub \texttt{Int}^+ ~\wedge~ \texttt{Zero} \sub \texttt{Int}^-$.
Moreover, a function symbol \texttt{suc} cannot be overloaded on
both sorts \texttt{Int}$^+$ and \texttt{Int}$^-$. In order to handle those
issues, we propose an algorithm based on unification of equality
constraints \cite{milner:78} and simplification of subtype constraints
\cite{dominic:01,aiken:92,pottier:01}. It infers the types of the variables that occur
in a pattern (\texttt{x} and \texttt{y} in the previous example). Moreover, we
also propose an algorithm
that checks that the patterns occurring in a {\tom} program are correctly typed.

Of course typing systems for algebraic terms and for rewriting has a long history.  It
includes the seminal works done on OBJ, order-sorted algebras~\cite{obj3-88,OBJ2-POPL}
and Maude~\cite{Maude2007}; the works done on feature
algebras~\cite{AitKaciPodelskiSmolkaTCS94} or on membership
constraints~\cite{HintermeierKK-jsc98,DBLP:journals/jsc/Comon98}; and the works on typing
rewriting in higher-order settings like~\cite{BakelF97} or ~\cite{barthe03a}.  Largely
inspired from these works, our contribution here focusses on the appropriate type system
for pattern-matching, possibly modulo associativity, in a \java environment.


\section{Type checking}
\label{sec:checking}

Given a signature~$\FV$, the (simplified) abstract syntax of a {\tom} program is as follows:
$$
\begin{array}{lcl}
        rule    & ::= & cond \longrightarrow action\\
        cond    & ::= & term_1 \match_{[s]} term_2
                   \mid cond_1 \wedge cond_2 \\
        term    & ::= &  x \mid f(term_1,\ldots,term_n)\\
        action  & ::= &  (term_1,\ldots,term_n)
\end{array}
$$
The left-hand side of a rule is a conjunction of matching conditions $term_1 \match_{[s]}
term_2 $ consisting of a pair of terms and where $s$ denotes a sort. We introduce the set
$\caF$ of free symbols. Terms are many-sorted terms composed of variables $x\in\XX$ and
function symbols $f\in\caF$. The set of terms is written $\TFX$. In general, an {\em
  action} is a {\java} statement, but for our purpose it is enough to consider an
abstraction consisting of terms $e_1,\ldots,e_n\in\TFX$
 whose instantiations are
described by the conditions, and used in the {\java} statement.

\begin{exmp}
The last rule of the \verb+greaterThan+ function given above can be represented by the following $rule$ expression:
$$
suc(x)\match_{[\NN]} t_1 \wedge suc(y)\match_{[\NN]} t_2 \longrightarrow (x, y)
$$
\end{exmp}

In a first step, we define $\caS$ as a set of sorts and we consider that a {\em context}~$\Gamma$ is composed of a set of pairs
(variable,sort), and (function symbol,rank):
{\small
$$
\Gamma ::= 
      \varnothing 
      \mid \Gamma_1 \cup \Gamma_2
      \mid x : s 
      \mid f : \sig{s_1, \ldots,s_n}{s}
$$
}
and context access is defined by the function \sortof{\Gamma}{e} $ : \Gamma \times
\TFX \rightarrow \caS$ which returns the sort of term $e$ in the context $\Gamma$:
\begin{center}
{\small
\begin{tabular}{rcl}
        \sortof{\Gamma}{x} 
        $= s$, if $x : s \in \Gamma$ 
        & &
        \sortof{\Gamma}{f(e_1,\ldots,e_n)}      
        $= s$, if $f : \sig{s_1, \ldots, s_n}{s} \in \Gamma$    \\
\end{tabular}}
\end{center}
where $x \in \XX$ and $f \in \caF$.

We denote by $\Gamma(x:s)$
the fact that $x:s$
 belongs to $\Gamma$.
Similarly,
$\Gamma(f:\sig{s_1,\ldots,s_n}{s})$ means that $f:\sig{s_1,\ldots,s_n}{s}$
belongs to $\Gamma$.
In Fig.~\ref{fig:simplechecking} we give a classical type checking system defined by a set of inference rules.
Starting from a context~$\Gamma$ and a rule expression $\pi$, we say that $\pi$
is well-typed if $\pi:\wt$ can be derived by applying the inference rules. $\wt$
is a special sort that corresponds to the well-typedness of a $rule$ or a
condition $cond$.

\begin{figure}[h!]
$$
\begin{array}{|c|}
\hline\\
        \infer[\TVar] 
                {\Gamma(x : s) \vdash x : s}
                {}
        \quad\quad\quad\quad
        \infer[\TFun]
                {\Gamma(f : \sig{s_1,\ldots,s_n}{s}) \vdash f(e_1, \ldots, e_n) : s}
                {\Gamma \vdash e_1 : s_1
                &\quad  
                \ldots
                &\quad
                \Gamma \vdash e_n : s_n}        \\
                                                \\
        \infer[\TMatch]
                {\Gamma \vdash (e_1 ~\match_{[s]}~ e_2) : \wt}
                {\Gamma \vdash e_1 : s
                &\quad  
                \Gamma \vdash e_2 : s}
        \quad\quad\quad\quad
        \infer[\TConj]
                {\Gamma \vdash (cond_1 \wedge \ldots \wedge cond_n) : \wt}
                {\Gamma \vdash (cond_1) : \wt
                &\quad
                \ldots
                &\quad
                \Gamma \vdash (cond_n) : \wt}   \\
                                                \\
        \infer[\TRule]
                {\Gamma \vdash (cond \longrightarrow (e_1, \ldots, e_n)) : \wt}
                {\Gamma \vdash (cond) : \wt
                &\quad
                \Gamma \vdash e_1 : s_1
                &\quad
                \ldots
                &\quad
                \Gamma \vdash e_n : s_n}        \\
                \mbox{if~} \sortof{\Gamma}{e_i} = s_i, \mbox{~for~} i \in [1,n]      \\
                                                \\
\hline
\end{array}
$$
\caption{Simple type checking system.}
\label{fig:simplechecking}
\end{figure}

\subsection{Subtypes and associative-matching}
\label{subsec:subtypes}

In order to introduce subtypes in \tom, we refine $\caS$ as the set of sorts, equipped
with a partial order~$\sub$, called {\em subtyping}. It is a binary relation on $\caS$
that satisfies reflexivity, transitivity and antisymmetry. Moreover, since we allow for
some symbols to be associative, we introduce the set $\caF_v$ of variadic symbols to
denote them.  Now, the set of terms is written $\TFVX$ and terms are many-sorted variadic
terms composed of variables $x\in\XX$ and function symbols $f\in\caF\cup\caF_v$. In the
following, we often write $\ell$ a variadic operator and call it a {\em list}.

We extend matching over lists to be associative. Therefore a pattern matches a subject
considering equality relation modulo flattening. Lists can be denoted by function symbols
$\ell \in \caF_v$ or by variables $x \in \XX$ annotated by $^*$. Such variables, which we
write $x^*$, are called {\em star variables}. So we consider in the following many-sorted
variadic terms composed of variables $x\in\XX$, star variables $x^*$ (where $x \in \XX$)
and function symbols $f\in\caF\cup\caF_v$. Moreover, we define that function symbols $\ell
\in \caF_v$ with variable domain (since they have a variable arity) of sort $s_1$ and
codomain $s$ are written $\ell : \vsig{s_1}{s}$ while star variables $x^*$ are also sorted
and written $x^* : s$.

Since terms built from syntactic and variadic operators can have the same codomain, we
cannot distinguish one from the other only by theirs sorts. However, this is necessary to
know which typing rule applies. Moreover, an insertion of a term can be treated by two
ways: given terms $\ell(e_1),\ell(e_2),\ell_1(e_1) \in \TFVX$ where $\ell, \ell_1 \in
\caF_v$, we have:
\begin{inparaenum}[1)]
\item an insertion of a list $\ell(e_1)$ into a list $\ell(e_2)$ corresponds to a
concatenation of these both lists resulting in $\ell(e_1,e_2)$;
\item an insertion of a list $\ell_1(e_1)$ into a list $\ell(e_2)$ results in $\ell(\ell_1(e_1),e_2)$.
\end{inparaenum}
For that reason, it is important to distinguish the list from the inserted term by its
function symbol in order to define which typing rule concerned for list must be
applied. For this purpose, we introduce a notion of sorts decorated with function symbols,
called {\em types}, to classify terms. The special symbol $?$ is used as decoration when
it is not useful to know what the function symbol is, i.e. when the expected type is known
but not the expected function symbol. This leads to a new set of decorated sorts $\caD$
which is equipped with a partial order $\sub_s$. It is a binary relation on $\caD$ where
$s_1^{g_1} \sub_s s_2^{g_2}$ is equivalent to $s_1 \sub s_2 \wedge (g_1 = g_2 \vee
g_2 = ?)$.

\textit{As pointed out in the introduction, we assume in all that paper that the
  signatures considered do not have multiple inheritance and that we do not allow function
  symbol overloading.} 

Given these notions, we refine the notion of context $\Gamma$ as a set of subtyping
declarations (type,type) and pairs (variable,type), and (function symbol,rank). This is
expressed by the following grammar:
$$
\Gamma ::= 
      \varnothing 
      \mid \Gamma_1 \cup \Gamma_2
      \mid s_1^? \transsub_s s_2^?
      \mid x : s^g 
      \mid x^*: s^{\ell} 
      \mid f : \sig{s_1^?, \ldots,s_n^?}{s^f}
      \mid \ell : \vsig{(s_1^?)}{s^{\ell}}
$$
where $\transsub_s$ corresponds to the reflexive transitive closure of~$\sub_s$ and context access is refined by the function \sortof{\Gamma}{e} $ : \Gamma \times
\TFVX \rightarrow \caD$ which returns the type of term $e$ in the context $\Gamma$:
$$
\begin{array}{rclcrcl}
        \sortof{\Gamma}{x}              
        &=& s^g, \textrm{ if } x : s^g \in \Gamma
        & &
        \sortof{\Gamma}{f(e_1,\ldots,e_n)}      
        &=& s^f, \textrm{ if } f : \sig{s_1^?, \ldots, s_n^?}{s^f} \in \Gamma    \\
        \sortof{\Gamma}{\xx}    
        &=& s^{\ell}, \textrm{ if } x^* : s^{\ell} \in \Gamma
        & &
        \sortof{\Gamma}{\ell(e_1,\ldots,e_n,e)}
        &=& s^{\ell}, \textrm{ if } \ell : \vsig{(s_1^?)}{s^{\ell}} \in \Gamma     \\
\end{array}
$$
where $x \in \XX$, $f \in \caF$, $\ell \in \caF_v$, $g \in \caF\cup\caF_V\cup\{?\}$ and
$s^?,s^f,s^g,s^{\ell} \in \caD$.
 
The context has at most one declaration of type or signature per term since overloading is
forbidden. This means that for $e \in \TFVX$ and $s_1^{g_1}, s_2^{g_2}$ (where
$g_1,g_2 \in \caF\cup\caF_v\cup\{?\}$ and $s_1^{g_1}, s_2^{g_2}
\in \caD$) if $e : s_1^{g_1} \in
\Gamma$ and $e : s_2^{g_2} \in \Gamma$ then $s_1^{g_1} = s_2^{g_2}$. We denote
by $\Gamma(s_1^{g_1} \sub_s s_2^{g_2})$
the fact that $s_1^{g_1} \sub_s s_2^{g_2}$
 belongs to $\Gamma$.

\subsection{Type checking algorithm}
\label{subsec:checking_algo}

In Fig.~\ref{fig:checking} we give a type checking system to many-sorted variadic terms applying
associative matching. The rules are standard except for the use of
decorated types. The most interesting rules are those 
that apply to lists. They are three: [\TEmpty] checks
if a empty list has the same type declared in $\Gamma$; [\TElem] is similar to
[\TFun] but is applied to lists; and [\TMerge] is applied to a concatenation of
two lists of type $s^{\ell}$ in $\Gamma$, resulting in a new list with same type
$s^{\ell}$.
\begin{figure}[h!]
\begin{center}\small
\begin{tabular}{|lll|}
        \hline
&&                                                                                        \\
        \infer[\TVar] 
                {\Gamma(x : s^g) \vdash x : s^g}
                {}
&&
        \infer[\TSVar] 
                {\Gamma(\xx : s^{\ell}) \vdash \xx : s^{\ell}}
                {}                                                                      \\
        where $g \in \caF\cup\caF_v \cup \{?\}$
&&                                      \\
&&                                                                                        \\
       \infer[\TFun]
                {\Gamma(f : \sig{s_1^?,\ldots,s_n^?}{s^f}) \vdash f(e_1, \ldots, e_n) : s^f}
                {\Gamma \vdash e_1 : s_1^?
                &\quad  
                \ldots
                &\quad
                \Gamma \vdash e_n : s_n^?}  
&&                                      
       \infer[\TEmpty]
                {\Gamma(\ell : \vsig{(s_1^?)}{s^{\ell}}) \vdash \ell() : s^{\ell}}
                {} \\
&&                                                                                        \\
       \infer[\TElem]
                {\Gamma(\ell : \vsig{(s_1^?)}{s^{\ell}}) \vdash \ell(e_1, \dots, e_n,
e) : s^{\ell}}
                {\Gamma \vdash \ell(e_1, \ldots, e_n) : s^{\ell}
                &\quad  
                \Gamma \vdash e : s_1^?}  
&&
       \infer[\TMerge]
                {\Gamma(\ell : \vsig{(s_1^?)}{s^{\ell}}) \vdash \ell(e_1, \dots, e_n, e) : s^{\ell}}
                {\Gamma \vdash \ell(e_1, \ldots, e_n) : s^{\ell}
                &\quad
                \Gamma \vdash e : s^{\ell}} \\
if \sortof{\Gamma}{e} $\neq s^{\ell}$ and $e \neq \xx$
&&
if \sortof{\Gamma}{e} $= s^{\ell}$          \\
&&                                                                                        \\
        \infer[\Sub]
                {\Gamma(s_1^{g_1} \sub_s s^g) \vdash e : s^g}
                {\Gamma \vdash e : s_1^{g_1}}
&&
        \infer[\Gen]
                {\Gamma \vdash e : s^?}
                {\Gamma \vdash e : s^h}                                                 \\
        where $g, g_1 \in \caF\cup\caF_v\cup\{?\}$
&&
        if \sortof{\Gamma}{e} $= s^h$, where $h \in \caF\cup\caF_v$                       \\
&&                                                                                        \\
       \infer[\TMatch]
                {\Gamma \vdash (e_1 ~\match_{[s^?]}~ e_2) : \wt}
                {\Gamma \vdash e_1 : s^?
                &\quad  
                \Gamma \vdash e_2 : s^?}
&&                                                                                       
       \infer[\TConj]
                {\Gamma \vdash (cond_1 \wedge \ldots \wedge cond_n) : \wt}
                {\Gamma \vdash (cond_1) : \wt
                &\quad
                \ldots
                &\quad
                \Gamma \vdash (cond_n) : \wt}                                           \\
&&                                                                                        \\
\multicolumn{3}{|c|}{
        \infer[\TRule]
                {\Gamma \vdash (cond \longrightarrow (e_1, \ldots, e_n)) : \wt}
                {\Gamma \vdash (cond) : \wt
                &\quad
                \Gamma \vdash e_1 : s_1^{g_1}
                &\quad
                \ldots
                &\quad
                \Gamma \vdash e_n : s_n^{g_n}}}                                  \\
\multicolumn{3}{|c|}{
                if $\sortof{\Gamma}{e_i} = s_i^{g_i}$, where $g_i \in
\caF\cup\caF_v\cup\{?\}$ for $i \in [1,n]$}      \\
&&                                                                                        \\
\hline
\end{tabular}
\end{center}
\caption{Type checking rules.}
\label{fig:checking}
\end{figure}

The type checking algorithm reads derivations bottom-up. Since the rule [\Sub] can be applied
to any kind of term, we consider a strategy where it is applied iff no other
typing rule can be applied. In practice, [\Sub] will be
combined with [\TVar], [\TFun] and [\TElem] and
the type $s_1^?$ which appears in the premise will be defined
according to the result of function \sortof{\Gamma}{e}. The algorithm stops if
it reaches the [\TVar] or [\TSVar] cases, ensuring that the original expression is well-typed, or if
none of the type checking rules can be applied, raising an error.

\begin{exmp}
\label{exmp:checking}
Let $\Gamma = \{\ell : \vsig{(\ZZ^?)}{\ZZ^{\ell}}, one : \sig{}{\NN^{one}}, x^* :
\ZZ^{\ell}, z^* : \ZZ^{\ell}, y :
\ZZ^?, \NN^? \sub_s \ZZ^?\}$. Then the expression $\ell(x^*,y,z^*)
\match_{[\ZZ^?]} \ell(one())
\longrightarrow (y)$ is well-typed
and its deduction tree is given in Fig.~\ref{fig:example-tc}

\begin{figure}
\begin{center}
{\scriptsize
\begin{mathpar}
        \inferrule*[Right={\scriptsize T-Rule},rightskip=51mm]
                {\inferrule*[leftskip=13mm,Right={\scriptsize T-Match}]
                        {\inferrule*[vdots=25mm,Right={\scriptsize
        T-Gen},leftskip=22mm,rightskip=30mm]
                                {\inferrule*[Right={\scriptsize
        T-Merge}]
                                        {\inferrule*[leftskip=19mm,Right={\scriptsize T-Elem}]
                                                {\inferrule*[leftskip=13mm,Right={\scriptsize T-Merge}]
                                                        {\inferrule*[Right={\scriptsize T-Empty}]
                                                                {~}
                                                                {\Gamma \vdash
\ell() : \ZZ^{\ell}}
                                                        \\ \quad \\
                                                        \inferrule*[Right={\scriptsize T-SVar}]
                                                                {~}
                                                                {\Gamma
                        \vdash x^* : \ZZ^{\ell}}
                                                        }
                                                        {\Gamma \vdash \ell(x^*) :
\ZZ^{\ell}}
                                                \\ \quad \\
                                                \inferrule*[Right={\scriptsize T-Var}]
                                                        {~}
                                                        {\Gamma \vdash y : \ZZ^?}
                                                }
                                                {\Gamma \vdash \ell(x^*,y) :
\ZZ^{\ell}}
                                        \\ \quad \\
                                        \inferrule*[Right={\scriptsize T-SVar}]
                                                {~}
                                                {\Gamma \vdash z^* : \ZZ^{\ell}}
                                        }
                                        {\Gamma \vdash \ell(x^*,y,z^*) : \ZZ^{\ell}}
                                }
                                {\Gamma \vdash \ell(x^*,y,z^*) : \ZZ^?}
                        \\
                        \inferrule*[Right={\scriptsize T-Gen}]
                                {\inferrule*[Right={\scriptsize T-Elem}]
                                        {\inferrule*[Right={\scriptsize T-Empty}]
                                                {~}
                                                {\Gamma \vdash \ell() : \ZZ^{\ell}}
                                        \\ \quad \\
                                        \inferrule*[Right={\scriptsize Sub}]
                                                {\inferrule*[Right={\scriptsize Gen}]
                                                        {\inferrule*[Right={\scriptsize T-Fun}]
                                                                {~}
                                                                {\Gamma \vdash one() :
                        \NN^{one}}
                                                        }
                                                        {\Gamma \vdash one() :
                        \NN^?}
                                                }
                                                {\Gamma \vdash one() :
                        \ZZ^?}
                                        }
                                        {\Gamma \vdash \ell(one()) : \ZZ^{\ell}}
                                }
                                {\Gamma \vdash \ell(one()) : \ZZ^?}
                        }
                        {\Gamma \vdash (\ell(x^*,y,z^*) \match_{[\ZZ^?]} \ell(one())) : \wt}
                \\ \quad \\
                \inferrule*[Right={\scriptsize T-Var}]
                        {~}
                        {\Gamma \vdash y : \ZZ^?}
                }
                {\Gamma \vdash (\ell(x^*,y,z^*) \match_{[\ZZ^?]} \ell(one())
\longrightarrow (y)) : \wt}
\end{mathpar}}
\end{center}
\caption{Type checking example.}
\label{fig:example-tc}
\end{figure}
\end{exmp}


\section{Type inference}
\label{sec:inference}

The type system presented in Section~\ref{sec:checking} needs rules to
control its use in order to find the expected deduction tree of an expression. Without these rules it is possible to find more than one
deduction tree for the same expression. For instance, in Example
\ref{exmp:checking}, the rule [\Sub] can be applied to the leaves resulting of application of
rule [\TVar]. The resulting tree will still be a valid deduction tree since the
variables in the leaves will have type $\NN^?$ instead of type $\ZZ^?$
declared in the context and $\NN^? \sub_s \ZZ^?$. For that reason, we are interested in defining another type system
able to infer the most general types of terms. We add type variables in the set
of types (defined up to here as a set of decorated sorts) to describe a
possibly infinite set of decorated sorts. The set of types
$\TYDV$ is given by a set of decorated sorts $\caD$, a set of type variables
$\VV$ and a special sort $\wt$:
$$
\tau ::= \alpha \mid s^g \mid \wt
$$
where $\tau \in \TYDV$, $\alpha \in \VV$, $g \in \caF\cup\caF_v\cup\{?\}$ and $s^g \in \caD$. 

In order to build the subtyping rule into the rules, we use a
{\em constraint set} \CC to store all equality and subtyping constraints. These
constraints limit types that terms can have. The language $\caC$ is built from
the set of types $\TYDV$ and the operators ``$=_s$'' (equality) and
``$\sub_s$'' (extension to $\TYDV$ of the partial order defined in Subsection~\ref{subsec:subtypes}):
$$
c ::= \tau_1 =_s \tau_2 \mid \tau_1 \sub_s \tau_2
$$
where $c \in \caC$, $\tau_1,\tau_2 \in \TYDV$.

A substitution $\sigma$ is said to {\em satisfy} an equation $\tau_1 =_s \tau_2$
if $\sigma\tau_1 = \sigma\tau_2$. Moreover, $\sigma$ is said
to {\em satisfy} a subtype relation $\tau_1 \sub_s \tau_2$ if $\sigma\tau_1
\sub_s \sigma\tau_2$.

Thus, $\sigma$ is a {\em solution} for $\CC$ if
it satisfies all constraints in $\CC$. This is written $\sigma \models \CC$. The set $\VC$ denotes the set of type
variables in $\CC$. 

Constraints are calculated according to the application of rules of
 type inference system (see Fig.~\ref{fig:inference}) where we
can read the judgment $\Gamma \ctvdash e : \tau \bullet \CC$ as ``the term $e$ has type $\tau$
under assumptions $\Gamma$ whenever the constraints \CC are
satisfied''. More formally, this judgment states that $
\forall \sigma \centerdot (\sigma \models \CC \rightarrow \sigma\Gamma \vdash e : \sigma\tau)$.

\subsection{Type inference algorithm}
\label{subsec:inference_algo}

In Fig. \ref{fig:inference} we give a type inference system with constraints.
In order to infer the type of a given expression~$\pi$, the context $\Gamma$ is initialized to: 
\begin{inparaenum}[1)]
\item subtype declarations of the form $s_1^? \sub_s s_2^?$ where $s_1^?$ and $s_2^? \in \caD$;
\item a pair of the form $(f : \sig{s_1^?,\ldots,s_n^?}{s^f})$ for each
syntactic operator $f$ occurring in $\pi$ where $s_i^?,s^f \in \caD$ for $i \in [1,n]$;
\item a pair of the form $(\ell : \vsig{s_1^?}{s^{\ell}})$ for each variadic
operator $\ell$ occurring in $\pi$ where $s_1^?,s^{\ell} \in \caD$;
\item a pair of the form $(x : \alpha)$ for each variable $x$ occurring in $\pi$ where $\alpha \in \VV$ is a fresh type variable;
\item a pair of the form $(\xx : \alpha)$ for each star variable $x^*$ occurring
in $\pi$ where $\alpha \in \VV$ is a fresh type variable.
\end{inparaenum}
Moreover, each type variable introduced in a sub-derivation is a fresh type variable and
the fresh type variables in different sub-derivations are distinct. As in Section~\ref{subsec:checking_algo}, we explain the rules concerning lists: [\CTEmpty] infers for an empty list $\ell()$ a type variable
$\alpha$ with the constraint $\alpha = s^{\ell}$, $s^{\ell}$ given by the rank of
$\ell$; [\CTElem] treats applications of lists to elements which are neither lists
with the same function symbol nor star variables; [\CTMerge] is applied to
concatenate two lists of same type $s^{\ell}$; and [\CTStar] is applied to
concatenate a list and a star variable of the same type $s^{\ell}$.

\begin{figure}[h!]
\begin{center}

\begin{tabular}{|c|}
        \hline
                                                                                        \\
                                                                                        \\
        \infer[\CTVar] 
                {\Gamma(x : \tau) \ctvdash  x : \alpha 
                        \bullet \{\alpha =_s \tau\}}
                {~}
                \quad\quad\quad\quad
        \infer[\CTSVar] 
                {\Gamma(\xx : \alpha_1) \ctvdash  \xx : \alpha 
                        \bullet \{\alpha_1 =_s \alpha\}}
                {~}             \\
                                \\
        \infer[\CTFun]
                {\Gamma(f : \sig{s_1^?,\ldots,s_n^?}{s^f}) 
                        \ctvdash f(e_1,\ldots,e_n) : \alpha 
                        \bullet  \{\alpha =_s s^f\} 
                        \bigcup \limits_{i = 1}^n \CC_i 
                        \cup \{\alpha_i \sub_s s_i^?\}}   
                {\Gamma \ctvdash e_1 : \alpha_1 \bullet \CC_1 
                &
                \ldots
                &
                \Gamma \ctvdash e_n : \alpha_n \bullet \CC_n}           \\
                                                                        \\
                                                                        \\
        \infer[\CTEmpty]
                {\Gamma(\ell : \vsig{(s_1^?)}{s^{\ell}}) \ctvdash \ell(): \alpha 
                        \bullet  \{\alpha =_s s^{\ell}\}}
                {~}                                                      \\
                                                                         \\
        \infer[\CTElem]
                {\Gamma(\ell : \vsig{(s_1^?)}{s^{\ell}}) 
                        \ctvdash \ell(e_1,\ldots,e_n,e) : \alpha 
                        \bullet  \{\alpha =_s s^{\ell}, \alpha_1 \sub_s s_1^?\}
                        \cup \CC_1 \cup \CC_2} 
                {\Gamma \ctvdash \ell(e_1,\ldots,e_n) : \alpha \bullet \CC_1
                &\quad
                \Gamma \ctvdash e : \alpha_1 \bullet \CC_2}             \\
         if $\sortof{\Gamma}{e} \neq s^{\ell}$ and $e \neq \xx$         \\
                                                                        \\
        \infer[\CTMerge]
                {\Gamma(\ell : \vsig{(s_1^?)}{s^{\ell}}) 
                        \ctvdash \ell(e_1,\ldots,e_n, e) : \alpha 
                        \bullet  \{\alpha =_s s^{\ell}\} \cup \CC_1 \cup \CC_2}
                {\Gamma \ctvdash \ell(e_1,\ldots,e_n) : \alpha \bullet \CC_1
                &\quad
                \Gamma \ctvdash e : \alpha \bullet \CC_2}               \\
         if $\sortof{\Gamma}{e} = s^{\ell}$                             \\
                                                                        \\
        \infer[\CTStar]
                {\Gamma(\ell : \vsig{(s_1^?)}{s^{\ell}}) 
                        \ctvdash \ell(e_1,\ldots,e_n, \xx) : \alpha 
                        \bullet  \{\alpha =_s s^{\ell}\} \cup \CC_1 \cup \CC_2}
                {\Gamma \ctvdash \ell(e_1,\ldots,e_n) : \alpha \bullet \CC_1
                &\quad  
                \Gamma \ctvdash \xx : \alpha \bullet \CC_2}             \\
                                                                        \\
        \infer[\CTMatch]
                {\Gamma \ctvdash (e_1 ~\match_{[\tau]}~ e_2) : \wt 
                        \bullet  \{\alpha_1 \sub_s \tau, \alpha_2 =_s \tau\} 
                        \cup \CC_1 \cup \CC_2}
                {\Gamma \ctvdash e_1 : \alpha_1 \bullet \CC_1
                &\quad
                \Gamma \ctvdash e_2 : \alpha_2 \bullet \CC_2}           \\
                                                                        \\
        \infer[\CTConj]
                {\Gamma \ctvdash (cond_1 \wedge \ldots \wedge cond_n) : \wt 
                        \bullet  \bigcup \limits_{i = 1}^n \CC_i}
                {\Gamma \ctvdash (cond_1) : \wt \bullet \CC_1
                &\quad
                \ldots
                &\quad
                \Gamma \ctvdash (cond_n) : \wt \bullet \CC_n}           \\
                                                                        \\
        \infer[\CTRule]
                {\Gamma \ctvdash (cond \longrightarrow (e_1,\ldots,e_n)) : \wt \bullet  \CC_{cond} \bigcup \limits_{i=1}^n C_i }
                {\Gamma \ctvdash (cond) : \wt \bullet \CC_{cond}
                &\quad
                \Gamma \ctvdash e_1 : \tau_1 \bullet \CC_{1}
                &
                \ldots
                &
                \Gamma \ctvdash e_n : \tau_n \bullet \CC_{n}}      \\
                if $\sortof{\Gamma}{e_i} = \tau_i$, for $i \in [1,n]$
                where $\tau_i \in \TYDV$   \\              
                                                                        \\
        \hline                                  
\end{tabular}
\end{center}
\caption{Type inference rules.}
\label{fig:inference}
\end{figure}

\begin{exmp}
\label{exmp:inference}
Let $\Gamma = \{\ell : \vsig{(\ZZ^?)}{\ZZ^{\ell}}, one :
\sig{}{\NN^{one}}, x^* : \alpha_1, y : \alpha_2, z^* : \alpha_3, \NN^? \sub_s
\ZZ^?\}$. Then the expression
$\ell(x^*,y,z^*) \match_{[\alpha_4]} \ell(one()) \longrightarrow (y)$ is well-typed
and the deduction tree is given in Fig.~\ref{fig:example-ti}.

\begin{figure}
\begin{center}
{\scriptsize
\begin{mathpar}
        \inferrule*[width=5cm,Right={\scriptsize CT-Star},rightskip=5mm]
                {\inferrule*[width=10cm,vdots=10mm,leftskip=35mm,Right={\scriptsize
CT-Elem},rightskip=70mm]
                        {\inferrule*[width=10cm,leftskip=5mm,Right={\scriptsize CT-Star}]
                                {\inferrule*[Right={\scriptsize CT-Empty}]
                                        {~}
                                        {\Gamma \ctvdash \ell() : \alpha_5 
                                                \bullet \CC_3 = \{\alpha_5 =_s \ZZ^{\ell}\}}
                                \\ \quad \quad \\
                                \inferrule*[Right={\scriptsize CT-SVar}]
                                        {~}
                                        {\Gamma \ctvdash x^* : \alpha_5 
                                                \bullet \CC_4 = \{\alpha_5 =_s \alpha_1\}}}
                                {\Gamma \ctvdash \ell(x^*) : \alpha_5 
                                        \bullet C_2 = \{\alpha_5 =_s \ZZ^{\ell}\}
                                        \cup \CC_3 \cup \CC_4}
                        ~~~~~~~~~~~~~~~~~~~~
                        \inferrule*[Right={\scriptsize CT-Var}]
                                {~}
                                {\Gamma \ctvdash y : \alpha_8 
                                        \bullet \CC_5 = \{\alpha_8 =_s \alpha_2\}}}
                        {\Gamma \ctvdash \ell(x^*,y) : \alpha_5 
                                \bullet C_1 = \{\alpha_5 =_s \ZZ^{\ell}, 
                                \alpha_8 \sub_s \ZZ^?\} \cup C_2 \cup \CC_5}
                ~~~~~~~~~~~~~~~~~
                \inferrule*[Right={\scriptsize CT-SVar}]
                        {~}
                        {\Gamma \ctvdash z^* : \alpha_5 
                                \bullet \CC_6 = \{\alpha_5 =_s \alpha_3\}}}
                {\Gamma \ctvdash \ell(x^*,y,z^*) : \alpha_5 
                        \bullet C_p = \{\alpha_5 =_s \ZZ^{\ell}\} \cup C_1 \cup \CC_6}
\end{mathpar}
\textrm{\textbf{(1)}} \\

\begin{mathpar}
        \inferrule*[width=10cm,Right={\scriptsize CT-Elem}]
                {\inferrule*[Right={\scriptsize CT-Empty}]
                        {~}
                        {\Gamma \ctvdash \ell() : \alpha_6 
                                \bullet \CC_7 =\{\alpha_6 =_s \ZZ^{\ell}\}}
                \\ \quad \quad \\
                \inferrule*[Right={\scriptsize CT-Fun}]
                        {~}
                        {\Gamma \ctvdash one() : \alpha_7 
                                \bullet \CC_8\{\alpha_7 =_s \NN^{one}\}}}
                {\Gamma \ctvdash \ell(one()) : \alpha_6 
                        \bullet C_s = \{\alpha_6 =_s \ZZ^{\ell}, \alpha_7 \sub_s
\ZZ^?\} \cup \CC_7 \cup \CC_8}
\end{mathpar}
\textrm{\textbf{(2)}} \\

\begin{mathpar}
        \inferrule*[width=19cm,Right={\scriptsize CT-Rule}]
                {\inferrule*[Right={\scriptsize CT-Match}]
                        {\textbf{(1)~~~~~~~~} 
                        \\ 
                        ~~~~~~~~~~\textbf{(2)}}
                        {\Gamma \ctvdash (\ell(x^*,y,z^*) 
                                \match_{[\alpha_4]} \ell(one())) : \wt 
                                \bullet \CC_{cond} =\{\alpha_5 \sub_s \alpha_4,
                                                \alpha_6 =_s \alpha_4\}}
		\\ \quad \quad \quad \quad
		\inferrule*[Right={\scriptsize CT-Var}]
			{~}
			{\Gamma \ctvdash y : \alpha_9 
                                        \bullet \CC_{10} = \{\alpha_9 =_s \alpha_2\}}
		}
                {\Gamma \ctvdash (\ell(x^*,y,z^*) 
                        \match_{[\alpha_4]} \ell(one()) \longrightarrow (y)) : \wt 
                        \bullet \CC_r = \{\alpha_2 =_s \alpha_2\} \cup \CC_{cond}
\cup \CC_{10}}
\end{mathpar}
}
\end{center}
\caption{Type inference example.}
\label{fig:example-ti}
\end{figure}
\end{exmp}


\subsection{Constraint resolution}
\label{subsec: resol}

In Fig.~\ref{fig:fail_rules} we propose an algorithm to decide whether a given
constraint set $\CC$ has a solution, where $g_1, g_2 \in \caF\cup\caF_v\cup\{?\}$. 
We denote by $s^{g_1} \sub_s s'^{g_2} \in \Gamma$ the fact that there exists $s_1,\ldots,s_n$ such that 
$s^?\sub s_1^?\in\Gamma$, 
$s_1^?\sub s_2^?\in\Gamma$, \ldots,
$s_n^?\sub s'^?\in\Gamma$ and $(g_1=g_2 \textrm{ or } g_2=?)$.
If the algorithm stops without failure then $\CC$ is said to
be in {\em solved form}. 
\begin{figure}[h!]
\begin{center}

\begin{tabular}{|clcl|}
        \hline
                &       &       &       \\
        (1)     & $\{s_1^{g_1} \sub_s \alpha , \alpha \sub_s s_2^{g_2}\} \uplus \CC'$
                & $\Longrightarrow$     
                & $fail$ if $s_1^{g_1} \sub_s s_2^{g_2} \notin \Gamma$  \\
        (2)     & $\{s_1^{g_1} \sub_s \alpha , s_2^{g_2} \sub_s \alpha\} \uplus \CC'$
                & $\Longrightarrow$     
                & $fail$ if $\not\exists s 
                                \centerdot (s_1^{g_1} \sub_s s^? \in \Gamma 
                                \wedge s_2^{g_2} \sub_s s^? \in \Gamma)$  \\
        (3)     & $\{\alpha \sub_s s_1^{g_1}, \alpha \sub_s s_2^{g_2}\} \uplus \CC'$
                & $\Longrightarrow$     
                & $fail$ if $(s_1^{g_1} \sub_s s_2^{g_2} \notin \Gamma
                         \wedge s_2^{g_2} \sub_s s_1^{g_1} \notin \Gamma)$   \\
        (4)     & $\{s_1^{g_1} \sub_s s_2^{g_2}\} \uplus \CC'$  
                & $\Longrightarrow$     
                & $fail$ if $s_1^{g_1} \sub_s s_2^{g_2} \notin \Gamma$ \\
        (5)     & $\{s_1^{g_1} = s_2^{g_2}\} \uplus \CC'$  
                & $\Longrightarrow$     
                & $fail$ if $s_1 \neq s_2 \vee g_1\neq g_2$ \\
                &       &       &       \\
        \hline
\end{tabular}
\end{center}
\caption{Rules for detection of errors in a constraint set $\CC$.}
\label{fig:fail_rules}
\end{figure}

While solving a constraint set $\CC$ we wish to make sure, after each
application of a constraint resolution rule, that the constraint set at hand is
satisfiable, so as to detect
errors as soon as possible. Therefore we must combine the rules for error detection and constraint
resolution in order to keep $\CC$ in solved form. The rules for the constraint resolution algorithm are provided in Fig.
\ref{fig:constraint_rules}, where $g, g_1, g_2 \in \caF\cup\caF_v\cup\{?\}$. The
rules (1)-(14) are recursively applied over $\CC$. More precisely, rules (1)-(3) work as a garbage
collector removing constraints that are no more useful. Rules (4) and (5)
generate $\sigma$. Rules (6) and (7) generate more simplified constraints. Rules (8)-(12)
generate $\sigma$ and simplified constraints by antisymmetric and transitive
subtype closure. Rules (13) and (14) are applied when none of previous rules can be applied
generating a new $\sigma$ from a constraint over a type variable that has no
other constraints. The algorithm stops if: a rule returns $\CC =
\varnothing$, then the algorithm returns 
the solution $\sigma$; if $\CC$ reaches a non-solved form, then the algorithm for detection of
errors returns $fail$; or if $\CC$ reaches a normal form different from the empty set,
then the algorithm returns an error. We say that the algorithm is \textit{failing} if it
returns either \textit{fails} or an error.

\begin{figure}[h]
\begin{center}
\begin{tabular}{|clcl|}
        \hline
                &       &       &                                       \\
        (1)     & $\{\tau =_s \tau \} \uplus \CC', \sigma$                      
                & $\Longrightarrow$     
                & $\CC', \sigma$                                                \\
        (2)     & $\{\tau \sub_s \tau \} \uplus \CC', \sigma$   
                & $\Longrightarrow$     
                & $\CC', \sigma$                                                \\
        (3)     & $\{s_1^{g_1} \sub_s s_2^{g_2} \} \uplus \CC', \sigma$ 
                & $\Longrightarrow$     
                & $\CC', \sigma$ if $s_1^{g_1} \sub_s s_2^{g_2} \in \Gamma$     \\
                &       &       &                                       \\
        (4)     & $\{\alpha =_s \tau\} \uplus \CC', \sigma$
                & $\Longrightarrow$     
                & $[\alpha \mapsto \tau]\CC', \{\alpha \mapsto \tau\} \cup \sigma$                           \\
        (5)     & $\{\tau =_s \alpha\} \uplus \CC', \sigma$     
                & $\Longrightarrow$     
                & $[\alpha \mapsto \tau]\CC', \{\alpha \mapsto \tau\} \cup \sigma$                           \\
                &       &       &                                       \\
        (6)    & $\{s_1^{g_1} \sub_s \alpha , s_2^{g_2} \sub_s \alpha\} 
                        \uplus \CC', \sigma$
                & $\Longrightarrow$     
                & $\{s^? \sub_s \alpha\} \cup \CC', \sigma$
                        if $\exists s 
                                \centerdot (s_1^{g_1} \sub_s s^? \in \Gamma 
                                \wedge s_2^{g_2} \sub_s s^? \in \Gamma)$\\
        (7a)   & $\{\alpha \sub_s s_1^{g_1}, \alpha \sub_s s_2^{g_2}\} 
                        \uplus \CC', \sigma$
                & $\Longrightarrow$     
                & $\{\alpha \sub_s s_1^{g_1}\} \cup \CC', \sigma$
                        if $(s_1^{g_1} \sub_s s_2^{g_2} \in \Gamma)$\\ 
        (7b)   & $\{\alpha \sub_s s_1^{g_1}, \alpha \sub_s s_2^{g_2}\} 
                        \uplus \CC', \sigma$
                & $\Longrightarrow$     
                & $\{\alpha \sub_s s_2^{g_2}\} \cup \CC', \sigma$
                        if $(s_2^{g_2} \sub_s s_1^{g_1} \in \Gamma)$\\ 
                &       &       &                                       \\   
        (8)     & $\{\tau_1 \sub_s \tau_2, \tau_2 \sub_s \tau_1\} 
                        \uplus \CC', \sigma$   
                & $\Longrightarrow$     
                & $\{\tau_1 =_s \tau_2\} \cup \CC', \sigma$                     \\
        (9)     & $\{\alpha_1 \sub_s \alpha , \alpha \sub_s \alpha_2\} 
                        \uplus \CC', \sigma$
                & $\Longrightarrow$     
                & $\{\alpha_1 \sub_s \alpha_2\} 
                        \cup [\alpha \mapsto \alpha_2]\CC', \{\alpha \mapsto
\alpha_2\} \cup \sigma$             \\
        (10)     & $\{s^g \sub_s \alpha , \alpha \sub_s \alpha_1\}
                        \uplus \CC', \sigma$
                & $\Longrightarrow$     
                & $\{s^g \sub_s \alpha_1\} 
                        \cup [\alpha \mapsto \alpha_1]\CC', \{\alpha \mapsto
\alpha_1\} \cup \sigma$             \\
        (11)     & $\{\alpha_1 \sub_s \alpha , \alpha \sub_s s^g\} 
                        \uplus \CC', \sigma$
                & $\Longrightarrow$     
                & $\{\alpha_1 \sub_s s^g\}
                        \cup [\alpha \mapsto \alpha_1]\CC', \{\alpha \mapsto
\alpha_1\} \cup \sigma$             \\
        (12)    & $\{s_1^{g_1} \sub_s \alpha , \alpha \sub_s s_2^{g_2}\} 
                        \uplus \CC', \sigma$
                & $\Longrightarrow$     
                & $[\alpha \mapsto s_2^{g_2}]\CC', \{\alpha \mapsto
s_2^{g_2}\} \cup \sigma$
                        if $s_1^{g_1} \sub_s s_2^{g_2} \in \Gamma$      \\
                &       &       &                                       \\
   
        (13)    & $\{\alpha \sub_s \tau\} \uplus \CC', \sigma$  
                & $\Longrightarrow$     
                & $\CC', \{\alpha \mapsto
\tau\} \cup \sigma$ if $\alpha \notin \VV(\CC')$                   \\
        (14)    & $\{\tau \sub_s \alpha\} \uplus \CC', \sigma$ 
                & $\Longrightarrow$     
                & $\CC', \{\alpha \mapsto
\tau\} \cup \sigma$ if $\alpha \notin \VV(\CC')$                   \\
                &       &       &                                       \\
        \hline
\end{tabular}
\end{center}
\caption{Constraint resolution rules in context $\Gamma$.}
\label{fig:constraint_rules}
\end{figure}
\begin{exmp}
\label{exmp:resolution}

Let $\Gamma = \{\ell : \vsig{(\ZZ^?)}{\ZZ^{\ell}}, one :
\sig{}{\NN^{one}}, x^* : \alpha_1, y : \alpha_2, z^* : \alpha_3, \NN^? \sub_s
\ZZ^?\}$ and
$\CC_{cond} = \{\alpha_5 =_s
\ZZ^{\ell}, \alpha_{10} =_s \alpha_1, \alpha_5 =_s \ZZ^{\ell}, \alpha_{10} =_s
\ZZ^{\ell}, \alpha_9 =_s
\alpha_2, \alpha_5 =_s \ZZ^{\ell},
\alpha_9 \sub_s \ZZ^?, \alpha_8 =_s \alpha_3, \alpha_5 =_s \ZZ^{\ell}, \alpha_8
=_s
\ZZ^{\ell},
\alpha_6 =_s \ZZ^{\ell},
\alpha_7 =_s \NN^{one}, \alpha_6 =_s \ZZ^{\ell}, \alpha_7 \sub_s \ZZ^?, \alpha_5 \sub_s
\alpha_4, \alpha_6 =_s \alpha_4, \alpha_2 =_s \alpha_2\}$ from the Example
\ref{exmp:inference}. Let $\sigma = \varnothing$ and $\CC = \CC_{cond}$. The constraint resolution algorithm
starts by:  
\begin{enumerate}
\item Application of sequence of rules (4), (1) and (5) generating 
$\{\alpha_2 \sub_s \ZZ^?, \NN^{one} \sub_s \ZZ^?, \ZZ^{\ell} \sub_s
\ZZ^{\ell}\} \cup \CC$
and
$\{\alpha_5 \mapsto \ZZ^{\ell}, \alpha_{10} \mapsto \alpha_1,
\alpha_1 \mapsto \ZZ^{\ell}, \alpha_9 \mapsto \alpha_2, \alpha_8 \mapsto \alpha_3,
\alpha_3 \mapsto \ZZ^{\ell}, \alpha_6
\mapsto \ZZ^{\ell}, \alpha_7 \mapsto \NN^{one}, \alpha_4 \mapsto \ZZ^{\ell}\} \cup \sigma$

\item Application of rules (1), (2) and (3) generating $\{\alpha_2 \sub_s \ZZ^?\}$ and $\sigma$;

\item Application of rule (13) generating $\varnothing$ and $\{\alpha_2 \mapsto
\ZZ^?\} \cup \sigma$, the algorithm then stops
and returns $\sigma$ providing a substitution for all type variables in
the deduction tree of $\ell(x^*,y,z^*) \match_{[\alpha_4]} \ell(one())
\longrightarrow (y)$.
\end{enumerate}
\end{exmp}


\section{Properties}
\label{sec:properties}

Since our type checking system and our type inference system address the same
issue, we must
check two properties. First, we show that every typing judgment that can be
derived from the inference rules also follows from the checking rules (Theorem
\ref{thm:soundness}), in particular the soundness. Then we show that a solution
given by the checking rules
can be extended to a solution proposed by the inference rules (Theorem
\ref{thm:completeness}).

\begin{defn}[Solution]
Let $\Gamma$ be a context and $e$ a term. 
\begin{itemize}
\item A {\em solution} for $(\Gamma,e)$ is a pair $(\sigma, T_1)$ such that $\sigma\Gamma
  \vdash \sigma e : T_1$, where $T_1 \in \caD \cup \{\wt\}$.
\item Assuming a well-formed sequent $\Gamma \vdash e : \tau \bullet \CC$, a
{\em solution} for
  $(\Gamma,e,\tau,\CC)$ is a pair $(\sigma,T_2)$ such that $\sigma$ satisfies $\CC$ and
  $\sigma\tau \sub_s T_2$, where $T_2 \in \caD\cup\{\wt\}$ and $\tau \in \TYDV$.
\end{itemize}
\end{defn}

\begin{thm}[Soundness of constraint typing]
\label{thm:soundness}
Suppose that $\Gamma \ctvdash e : \tau \bullet \CC$ is a valid sequent. If $(\sigma, s^g)$ is a
solution for $(\Gamma, e, \tau, \CC)$, then it is also a solution for
$(\Gamma,e)$ (i.e. $e$ is well-typed in $\Gamma$).
\end{thm}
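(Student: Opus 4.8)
The plan is to prove the theorem by structural induction on the derivation of the constraint-typing sequent $\Gamma \ctvdash e : \tau \bullet \CC$, showing in each case that whenever a substitution $\sigma$ satisfies the generated constraint set $\CC$ one can reconstruct a derivation of $\sigma\Gamma \vdash \sigma e : \sigma\tau$ in the checking system of Figure~\ref{fig:checking}; the theorem then follows by one last step that lifts $\sigma\tau$ to the given $s^g$ using the hypothesis $\sigma\tau \sub_s s^g$. It is cleanest to isolate the induction as a lemma, namely that for every valid sequent $\Gamma \ctvdash e : \tau \bullet \CC$ and every $\sigma \models \CC$ we have $\sigma\Gamma \vdash \sigma e : \sigma\tau$ — this is exactly the informal reading of the judgment given in Section~\ref{sec:inference}, turned into a statement proved by induction on the rules of Figure~\ref{fig:inference}. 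Granting that lemma, the theorem is immediate: if $\sigma\tau = s^g$ we are done; otherwise $\sigma\tau \sub_s s^g$, so a [\Gen] step (to erase a function-symbol decoration) followed by a [\Sub] step along the $\transsub_s$ declarations recorded in $\Gamma$ yields $\sigma\Gamma \vdash \sigma e : s^g$, i.e.\ $(\sigma,s^g)$ is a solution for $(\Gamma,e)$. Observe throughout that terms carry no type variables, so $\sigma$ acts only on the context $\Gamma$ and on the sort annotation $\tau$ of a matching condition $e_1 \match_{[\tau]} e_2$.

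\textbf{Base cases and term formation.} For [\CTVar], [\CTSVar], [\CTEmpty] the generated constraint is a single equality $\alpha =_s \theta$, with $\theta$ the type declared for the variable/star variable or the codomain of the rank of $\ell$; from $\sigma \models \CC$ we get $\sigma\alpha = \sigma\theta$, and [\TVar]/[\TSVar]/[\TEmpty] give $\sigma\Gamma \vdash e : \sigma\theta = \sigma\alpha$ directly. For [\CTFun] and [\CTElem], $\sigma$ satisfies the union of the children's constraint sets together with an equality $\alpha =_s s^f$ (resp.\ $\alpha =_s s^{\ell}$) and the subtyping constraints $\alpha_i \sub_s s_i^?$ on the argument positions; applying the induction hypothesis to each premise gives $\sigma\Gamma \vdash e_i : \sigma\alpha_i$, and from $\sigma\alpha_i \sub_s s_i^?$ a [\Gen]-then-[\Sub] step coerces each argument to the declared domain type $s_i^?$, after which [\TFun] (resp.\ [\TElem]) concludes. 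For [\CTMerge] and [\CTStar] the second premise already carries the list type $\sigma\alpha = s^{\ell}$, so [\TMerge] applies; in the [\CTStar] case $e = x^*$, and since [\TMerge] does not forbid star variables it is precisely the rule to use, its side condition $\sortof{\sigma\Gamma}{x^*} = s^{\ell}$ being a consequence of the equality $\alpha_1 =_s \alpha$ produced by the [\CTSVar] premise.

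\textbf{Conditions and rules.} For [\CTMatch], [\CTConj], [\CTRule] one applies the induction hypothesis to the premises and then the corresponding rule [\TMatch], [\TConj], [\TRule]. For [\CTMatch] one first uses [\Gen] on the pattern and the subject to bring them to a common ``$?$''-decorated type, so that [\TMatch] (which expects $\match_{[s^?]}$) applies, using the constraints $\alpha_1 \sub_s \tau$ and $\alpha_2 =_s \tau$. For [\CTRule] the side condition $\sortof{\Gamma}{e_i} = \tau_i$ transfers verbatim to $\sigma\Gamma$ because it refers to the codomain of a ground signature, and likewise for the side condition of [\TRule].

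\textbf{Main obstacle.} The real work is the bookkeeping around the two coercion rules [\Sub] and [\Gen]. Because $\Gamma$ records subtyping only between ``$?$''-decorated sorts (the $\transsub_s$ declarations), turning a semantic fact such as $\sigma\alpha_i \sub_s s_i^?$ into an actual checking derivation requires a [\Gen] step to strip a function-symbol decoration before chaining [\Sub], and one has to check, using the no-overloading and no-multiple-inheritance assumptions, that this is always possible and lands on exactly the required type. A second delicate point is that the side conditions on $\sortof{\cdot}{\cdot}$ in [\TElem] and [\TMerge] are evaluated in the inference context (where a variable's type is still a type variable) whereas the checking derivation lives in $\sigma\Gamma$ (where it is ground); one must argue that after applying the equality constraints the right list rule remains available, and note that even if the applicable checking rule is not the literal counterpart of the inference rule, soundness only asks for well-typedness, so any valid derivation suffices. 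The same remark absorbs the discrepancy between the inferred sort annotation of a matching condition and the ``$?$''-decoration demanded by [\TMatch], which is handled by the [\Gen] step on the two matched terms.
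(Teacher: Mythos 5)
Your proposal is correct and follows essentially the same route as the paper: induction on the constraint-typing derivation, applying the induction hypothesis to the premises, coercing with [\Sub]/[\Gen] where the generated subtype constraints demand it, closing each case with the corresponding checking rule, and finishing with a subsumption step from $\sigma\tau$ up to $s^g$. Factoring the induction into the lemma ``$\sigma \models \CC \Rightarrow \sigma\Gamma \vdash \sigma e : \sigma\tau$'' and postponing the lift to $s^g$ is only a cosmetic repackaging of the paper's argument (which carries $s^g$ through each case), and your handling of [\CTStar] via [\TMerge] and of the decoration bookkeeping is consistent with, indeed slightly more explicit than, the paper's own treatment.
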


\begin{proof}
By induction on the given constraint typing derivation for $\Gamma
\ctvdash e : \tau \bullet \CC$. We just detail the most noteworthy cases of this
proof.

\begin{flushleft}
\begin{tabular}{lll}
        \textit{Case} $\CTElem$:        & $e = \ell(a_1,\ldots,a_n,a)$
                                & $\tau = \alpha$                       \\
                                
                                & $\Gamma \ctvdash \ell(a_1,\ldots,a_n) : \alpha
                                \bullet \CC_1$                  
                                & $\Gamma \ctvdash a : \alpha_1 \bullet
                                \CC_2$                                  \\
                                
                                & $\CC = \CC_1 \cup \CC_2 \cup \{\alpha =_s
s_2^{\ell}, \alpha_1
                                \sub_s s_1^?\}$&        \\ 
\end{tabular}
\end{flushleft}

We are given that $(\sigma,s^g)$ is a solution for $(\Gamma(\ell :
\vsig{(s_1^?)}{s_2^{\ell}}), e, \alpha, \CC)$, that is, $\sigma$ satisfies $\CC$ and
$\sigma\alpha \sub_s s^g$. Since $(\sigma,s^g)$ satisfies $\CC_1$ and $\CC_2$,
$(\sigma, \sigma\alpha)$ and $(\sigma, \sigma\alpha_1)$ are solutions for
$(\Gamma, \ell(a_1,\ldots,a_n), \alpha, \CC_1)$ and $(\Gamma,a,\alpha_1,\CC_2)$,
respectively.
By the \ih, we have $\sigma\Gamma \vdash \sigma(\ell(a_1,\ldots,a_n)): \sigma\alpha$ and
$\sigma\Gamma \vdash \sigma a : \sigma\alpha_1$. Since $\sigma\alpha_1 \sub_s
s_1^?$, by $\Sub$ we obtain $\sigma\Gamma \vdash \sigma a : s_1^?$. Since $\sigma\alpha =
s_2^{\ell}$, by
$\TElem$ we obtain $\sigma(\Gamma(\ell : \vsig{(s_1^?)}{s_2^{\ell}})) \vdash
\sigma(\ell(a_1, \dots, a_n,
a)) : s_2^{\ell}$. By $\Sub$ we
obtain  $\sigma(\Gamma(\ell : \vsig{(s_1^?)}{s_2^{\ell}})) \vdash \sigma(\ell(a_1, \dots, a_n,
a)) : s^g$, as required.

\begin{flushleft}
\begin{tabular}{lll}
        \textit{Case} $\CTMerge$:       & $e = \ell(a_1,\ldots,a_n,a)$
                                & $\tau = \alpha$                       \\
                                
                                & $\Gamma \ctvdash \ell(a_1,\ldots,a_n) : \alpha
                                \bullet \CC_1$                  
                                & $\Gamma \ctvdash a_1 : \alpha \bullet
                                \CC_2$                                  \\
                                
                                & $\CC = \CC_1 \cup \CC_2 \cup \{\alpha =_s
s_2^{\ell}\}$&  \\ 
\end{tabular}
\end{flushleft}

We are given that $(\sigma,s^g)$ is a solution for $(\Gamma(\ell :
\vsig{(s_1^?)}{s_2^{\ell}}), e, \alpha, \CC)$, that is, $\sigma$ satisfies $\CC$ and
$\sigma\alpha \sub_s s^g$. Since $(\sigma,s^g)$ satisfies $\CC_1$ and $\CC_2$,
$(\sigma, \sigma\alpha)$ and $(\sigma, \sigma\alpha_1)$ are solutions for
$(\Gamma, \ell(a_1,\ldots,a_n), \alpha, \CC_1)$ and $(\Gamma,a,\alpha,\CC_2)$.
By the \ih, we have $\sigma\Gamma \vdash \sigma(\ell(a_1,\ldots,a_n)): \sigma\alpha$ and
$\sigma\Gamma \vdash \sigma a : \sigma\alpha_1$. Since $\sigma\alpha =
s_2^{\ell}$, by
$\TMerge$ we obtain $\sigma(\Gamma(\ell : \vsig{(s_1^?)}{s_2^{\ell}})) \vdash
\sigma(\ell(a_1, \dots, a_n,
a)) : s_2^{\ell}$. By $\Sub$ we
obtain  $\sigma(\Gamma(\ell : \vsig{(s_1^?)}{s_2^{\ell}})) \vdash \sigma(\ell(a_1, \dots, a_n,
a)) : s^g$, as required.

\begin{flushleft}
\begin{tabular}{lll}
        \textit{Case} $\CTMatch$:       & $e = a_1 ~\match_{[\tau_1]}~ a_2$
                                & $\tau = \wt$                  \\
                                
                                & $\Gamma \ctvdash a_1 : \alpha_1 \bullet \CC_1$                        
                                & $\Gamma \ctvdash a_2 : \alpha_2 \bullet \CC_2$                                        \\
                                
                                & $\CC = \CC_1 \cup \CC_2 \cup \{\alpha_1 \sub_s
\tau_1, \alpha_2 =_s \tau_1\}$&   \\ 
\end{tabular}
\end{flushleft}

We are given that $(\sigma,\wt)$ is a solution for $(\Gamma, e, \wt, \CC)$, that
is, $\sigma$ satisfies $\CC$ and
$\sigma \wt \sub_s \wt$. Since $(\sigma,\wt)$ satisfies $\CC_1$ and $\CC_2$,
$(\sigma, \sigma\alpha_1)$ and $(\sigma, \sigma\alpha_2)$ are solutions for
$(\Gamma, a_1, \alpha_1, \CC_1)$ and $(\Gamma,a_2,\alpha_2,\CC_2)$, respectively.
By the \ih, we have $\sigma\Gamma \vdash \sigma a_1: \sigma\alpha_1$ and
$\sigma\Gamma \vdash \sigma a_2 : \sigma\alpha_2$. Since $\sigma\alpha_1 \sub_s
\sigma\tau_1$, by $\Sub$ we obtain
$\sigma\Gamma \vdash \sigma a_1: \sigma\tau_1$. Since $\sigma\alpha_2 =
\sigma \tau_1$, by
$\TMatch$ we obtain $\sigma\Gamma \vdash \sigma(a_1 ~\match_{[\tau_1]}~ a_2) : \wt$, as required.
\end{proof}

\begin{defn}[Normal form of typing derivation]
A typing derivation is in {\em normal form} if it does not have successive
applications of rule \textsc{[\Sub]}.
\end{defn}

\begin{thm}[Completeness of constraint typing]
\label{thm:completeness}
Suppose that $\pi = \Gamma \ctvdash e : \tau \bullet \CC$. Write $V(\pi)$ for
the set of all type variables mentioned in the last rule used to derive $\pi$
and write $\sigma \backslash V(\pi)$ for the substitution that is undefined for all the
variables in $V(\pi)$ and otherwise behaves like $\sigma$. If $(\sigma, s^g)$ is a
solution for $(\Gamma,e)$ and $dom(\sigma) \cap V(\pi) = \varnothing$, then
there is some solution $(\sigma',s^g)$ for
$(\Gamma,e,\tau,\CC)$ such that $\sigma' \backslash V(\pi) = \sigma$.
\end{thm}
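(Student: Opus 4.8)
The plan is to argue by induction on the constraint typing derivation $\pi = \Gamma \ctvdash e : \tau \bullet \CC$, in tandem with an analysis of the checking derivation witnessing the hypothesis $\sigma\Gamma \vdash \sigma e : s^g$. For each constraint typing rule I would first put that checking derivation into normal form, so that it ends with the core rule whose conclusion has the same head as $e$ ([\TVar], [\TSVar], [\TFun], [\TEmpty], [\TElem], [\TMerge], [\TMatch], [\TConj] or [\TRule]), preceded by a block of [\Sub] and [\Gen] coercion steps. Peeling off that block exposes an intermediate decorated sort $s_0^{g_0}$ with $s_0^{g_0} \sub_s s^g$ at which the head of $e$ was typed before coercion, and, for each immediate subterm $e_i$, a checking judgment $\sigma\Gamma \vdash \sigma e_i : T_i$ in which $T_i$ is precisely the argument sort demanded by the core rule (e.g.\ $T_i = s_i^?$ in [\TFun], or the codomain $s^{\ell}$ for the left operand of [\TElem]/[\TMerge]). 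So $(\sigma, T_i)$ is a solution for $(\Gamma, e_i)$; since $V(\pi_i) \subseteq V(\pi)$ — constraint sets, and with them the type variables recorded by a rule, only accumulate towards the root — the hypothesis $dom(\sigma)\cap V(\pi)=\varnothing$ is inherited, and the \ih applied to $\pi_i$ yields $\sigma'_i$ with $\sigma'_i \models \CC_i$, $\sigma'_i\tau_i \sub_s T_i$ and $\sigma'_i \backslash V(\pi_i) = \sigma$.

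The second step is to glue these together. As the fresh variables of distinct sub-derivations are pairwise disjoint and each $\sigma'_i$ agrees with $\sigma$ off its own fresh set, $\bigcup_i \sigma'_i$ is a well-defined substitution; I extend it by mapping the fresh result variable of the current rule to $s_0^{g_0}$, obtaining $\sigma'$. It then remains to check that $\sigma'$ satisfies the constraints the current rule adds on top of $\bigcup_i \CC_i$. An equation $\alpha =_s s^f$ or $\alpha =_s s^{\ell}$ holds because $s_0^{g_0}$ is exactly the codomain of the core [\TFun]/[\TElem]/[\TMerge] instance; a subtyping constraint $\alpha_i \sub_s s_i^?$ holds because, with $T_i = s_i^?$, the \ih already gives $\sigma'\alpha_i \sub_s s_i^?$; and for [\CTMatch] the constraints $\alpha_1 \sub_s \tau_m$, $\alpha_2 =_s \tau_m$ on the match-sort variable $\tau_m$ are discharged by binding $\tau_m$ to the sort that the core rule gives the subject $e_2$. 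Finally $\sigma'\tau \sub_s s^g$ follows from $s_0^{g_0} \sub_s s^g$ (and is vacuous for $\tau = \wt$), and $\sigma' \backslash V(\pi) = \sigma$ because every binding added to $\sigma$ concerns a variable of $V(\pi)$.

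The base cases [\CTVar], [\CTSVar], [\CTEmpty] are immediate: bind the fresh result variable to the sort assigned by [\TVar]/[\TSVar]/[\TEmpty], which is exactly the right-hand side of the single generated equation. Among the inductive steps I would detail [\CTFun] together with [\CTElem] — observing that the side condition on $\sortof{\Gamma}{e}$ that separates [\CTElem] from [\CTMerge]/[\CTStar] is preserved by $\sigma$, since $\sigma$ touches only type variables and not function symbols, so the applicable core checking rule is forced — and [\CTMerge]/[\CTStar], where the two premises deliberately reuse the same result variable $\alpha$: here both premises' constraint sets already contain $\alpha =_s s^{\ell}$, so the two sub-substitutions are both forced to send $\alpha$ to the list codomain and therefore agree, so $\bigcup_i \sigma'_i$ stays consistent. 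The cases [\CTConj] and [\CTRule] are then routine.

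The step I expect to be the real obstacle is the variable hygiene underlying the first two paragraphs: making rigorous that the fresh variables generated across the sub-derivations, and by the rule at hand, are pairwise disjoint and disjoint from $dom(\sigma)$, so that the per-premise substitutions genuinely merge into a single solution of the whole constraint set — and, hand in hand, reading off the [\Sub]/[\Gen]-padded checking derivation the exact witness sorts for the fresh variables that make the newly generated equality and subtyping constraints true. I expect [\CTMatch] to be the most delicate case, since there one must reconcile the $?$-decorated match annotation used by [\TMatch] with the decorated sort that the equation $\alpha_2 =_s \tau_m$ pins $\tau_m$ to.
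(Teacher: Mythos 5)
Your proposal takes essentially the same route as the paper's own proof: induction on the constraint-typing derivation, inverting the checking derivation of $\sigma\Gamma \vdash \sigma e : s^g$ into its core rule plus \Sub/\Gen coercions (the paper phrases this as a case split on whether the last checking rule is \TElem, \TMerge\ or \Sub, using the $\sortof{\Gamma}{e}$ side conditions to discard the impossible branch), then applying the induction hypothesis to the premises and extending the merged sub-substitutions with bindings for the freshly introduced variables (e.g.\ $\alpha \mapsto s_2^{\ell}$, $\alpha_1 \mapsto s_1^?$, and a witness for the match annotation in the \CTMatch\ case) so that the newly generated equality and subtyping constraints hold and $\sigma'\tau \sub_s s^g$. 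One small caveat: your justification that the hypothesis $dom(\sigma)\cap V(\pi)=\varnothing$ is inherited because $V(\pi_i)\subseteq V(\pi)$ is false under the theorem's definition of $V$ (it records only the variables of the \emph{last} rule); the disjointness you need is really supplied by the freshness convention on type variables, a bookkeeping point the paper's proof likewise leaves implicit.
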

\begin{proof}
By induction on the given constraint typing derivation in normal form, but we
must take care with fresh names of variables. We just detail the most noteworthy cases of this
proof.

\begin{flushleft}
\begin{tabular}{lll}
        \textit{Case} $\CTElem$:        & $e = \ell(a_1,\ldots,a_n,a)$
                                & $\tau = \alpha$                       \\
                                
                                & $\pi_1 = \Gamma \ctvdash \ell(a_1,\ldots,a_n) : \alpha
                                \bullet \CC_1$                  
                                & $\pi_2 = \Gamma \ctvdash a : \alpha_1 \bullet
                                \CC_2$                                  \\
                                
                                & $\CC = \CC_1 \cup \CC_2 \cup \{\alpha =_s
s_2^{\ell}, \alpha_1
                                \sub_s s_1^?\}$
                                & $V(\pi) = \{\alpha, \alpha_1\}$       \\
                                
                                & $\sortof{\Gamma}{a} ~\neq s_2^{\ell}$ & \\
\end{tabular}
\end{flushleft}

From the assumption that $(\sigma,s^g)$ is a solution for $(\Gamma(\ell :
\vsig{(s_1^?)}{s_2^{\ell}}), \ell(a_1,\ldots,a_n,a))$ and $dom(\sigma) \cap V(\pi) = \varnothing$, we have
$\sigma(\Gamma(\ell : \vsig{(s_1^?)}{s_2^{\ell}})) \vdash \sigma(\ell(a_1, \dots, a_n,
a)) : s^g$. This can be derived from: 
\begin{inparaenum}[1)] 
\item $\TMerge$,
\item $\TElem$ or
\item $\Sub$. 
\end{inparaenum}
In all those cases, we must exhibit a substitution $\sigma'$ such that: (a) $\sigma'
\backslash V(\pi)$ agrees with $\sigma$; (b) $\sigma'\alpha \sub_s s^g$; (c) $\sigma'$
satisfies $\CC_1$ and $\CC_2$; and (d) $\sigma'$ satisfies $\{\alpha =_s
s_2^{\ell},
\alpha_1 \sub_s s_1^?\}$. We reason by cases as follows:  

\begin{enumerate}
\item By $\TMerge$ we assume that $s^g = s_2^{\ell}$ and we know that
$\sigma\Gamma \vdash \sigma(\ell(a_1,\ldots,a_n)):
s_2^{\ell}$ and $\sigma\Gamma \vdash \sigma a : s_2^{\ell}$. But since we cannot find a
type $s_3^{\ell}$ such that $s_3^{\ell} \sub_s s_2^{\ell}$, $\sigma\Gamma \vdash \sigma a :
s_2^{\ell}$ cannot be derived even from $\Sub$. Thus $\TMerge$ is not a relevant case.

\item By $\TElem$ we assume that $s^g = s_2^{\ell}$ and we know that
$\sigma\Gamma \vdash \sigma(\ell(a_1,\ldots,a_n)):
s_2^{\ell}$ and $\sigma\Gamma \vdash \sigma a : s_1^?$. By the \ih, there are
solutions $(\sigma_1,s_2^{\ell})$ for $(\Gamma, \ell(a_1,\ldots,a_n), \alpha, \CC_1)$
and $(\sigma_2,s_1^?)$ for $(\Gamma,a,\alpha_1,\CC_2)$, and
$dom(\sigma_1) \backslash V(\pi_1) = \varnothing = dom(\sigma_2) \backslash
V(\pi_2)$. Define $\sigma' = \{\alpha \mapsto s_2^{\ell}, \alpha_1 \mapsto s_1^?\}
\cup \sigma \cup \sigma_1 \cup \sigma_2$. Conditions (a), (b), (c) and (d) are obviously satisfied. Thus, we
see that $(\sigma',s^g)$ is a solution for $(\Gamma(\ell :
\vsig{(s_1^?)}{s_2^{\ell}}), \ell(a_1,\ldots,a_n,a), \alpha, \CC)$.

\item By $\Sub$ we assume that $s_2^{\ell} \sub_s s^g \in \Gamma$ and we know that
$\sigma(\Gamma(\ell : \vsig{(s_1^?)}{s_2^{\ell}})) \vdash \sigma(\ell(a_1, \dots, a_n, a))
: s_2^{\ell}$. This must be derived from $\TElem$, similar to case (2). 
\end{enumerate}

\begin{flushleft}
\begin{tabular}{lll}
        \textit{Case} $\CTMerge$:       & $e = \ell(a_1,\ldots,a_n,a)$
                                & $\tau = \alpha$                       \\
                                
                                & $\pi_1 = \Gamma \ctvdash \ell(a_1,\ldots,a_n) : \alpha
                                \bullet \CC_1$                  
                                & $\pi_2 = \Gamma \ctvdash a : \alpha \bullet
                                \CC_2$                                  \\
                                
                                & $\CC = \CC_1 \cup \CC_2 \cup \{\alpha =_s
s_2^{\ell}\}$
                                & $V(\pi) = \{\alpha\}$ \\
                                
                                & $\sortof{\Gamma}{a} = s_2^{\ell}$ & \\
\end{tabular}
\end{flushleft}

From the assumption that $(\sigma,s^g)$ is a solution for $(\Gamma(\ell :
\vsig{(s_1^?)}{s_2^{\ell}}), \ell(a_1,\ldots,a_n,a))$ and $dom(\sigma) \cap V(\pi) = \varnothing$, we have
$\sigma(\Gamma(\ell : \vsig{(s_1^?)}{s_2^{\ell}})) \vdash \sigma(\ell(a_1, \dots, a_n,
a)) : s^g$. This can be derived from:
\begin{inparaenum}[1)] 
\item $\TMerge$,
\item $\TElem$ or
\item $\Sub$. 
\end{inparaenum}
In all those cases, we must exhibit a substitution $\sigma'$ such that: (a) $\sigma'
\backslash V(\pi)$ agrees with $\sigma$; (b) $\sigma'\alpha \sub_s s^g$; (c) $\sigma'$
satisfies $\CC_1$ and $\CC_2$; and (d) $\sigma'$ satisfies $\{\alpha =_s
s_2^{\ell}\}$. We reason by cases as follows:  

\begin{enumerate}
\item By $\TMerge$ we assume that $s^g = s_2^{\ell}$ and we know that
$\sigma\Gamma \vdash \sigma(\ell(a_1,\ldots,a_n)):
s_2^{\ell}$ and $\sigma\Gamma \vdash \sigma a : s_2^{\ell}$. By the \ih, there are
solutions $(\sigma_1,s_2^{\ell})$ for $(\Gamma, \ell(a_1,\ldots,a_n), \alpha, \CC_1)$
and $(\sigma_2,s_2^{\ell})$ for $(\Gamma,a,\alpha,\CC_2)$, and
$dom(\sigma_1) \backslash V(\pi_1) = \varnothing = dom(\sigma_2) \backslash
V(\pi_2)$. Define $\sigma' = \{\alpha \mapsto s_2^{\ell}\}
\cup \sigma \cup \sigma_1 \cup \sigma_2$. Conditions (a), (b), (c) and (d) are obviously satisfied. Thus, we
see that $(\sigma',s^g)$ is a solution for $(\Gamma(\ell :
\vsig{(s_1^?)}{s_2^{\ell}}), \ell(a_1,\ldots,a_n,a), \alpha, \CC)$.

\item By $\TElem$ we assume that $s^g = s_2^{\ell}$ and we know that
$\sigma\Gamma \vdash \sigma(\ell(a_1,\ldots,a_n)):
s_2^{\ell}$ and $\sigma\Gamma \vdash \sigma a : s_1^?$. But, because of the
application condition of $\TElem$, we cannot find a type $s_1^{\ell}$ for
$\sigma a$ such that $s_1^{\ell} \sub_s s_1^?$, $\sigma\Gamma \vdash \sigma a :
s_1^?$ cannot be derived from $\Gen$. Likewise, since we cannot find a type
$s_3^{\ell}$ for $\sigma a$ such that $s_3^{\ell} \sub_s s_1^?$, $\sigma\Gamma \vdash \sigma a :
s_1^?$ cannot be derived even from $\Sub$. Thus $\TElem$ is not a relevant case.

\item By $\Sub$ we assume that $s_2^{\ell} \sub_s s^g \in \Gamma$ and we know that
$\sigma(\Gamma(\ell : \vsig{(s_1^?)}{s_2^{\ell}})) \vdash \sigma(\ell(a_1, \dots, a_n, a))
: s_2^{\ell}$. This must be derived from $\TMerge$, similar to case (1). 
\end{enumerate}

\begin{flushleft}
\begin{tabular}{lll}
        \textit{Case} $\CTMatch$:       & $e = a_1 ~\match_{[\tau_1]}~ a_2$
                                & $\tau = \wt$                  \\
                                
                                & $\pi_1 = \Gamma \ctvdash a_1 : \alpha_1 \bullet \CC_1$                        
                                & $\pi_2 = \Gamma \ctvdash a_2 : \alpha_2 \bullet \CC_2$                                        \\
                                
                                & $\CC = \CC_1 \cup \CC_2 \cup \{\alpha_1 \sub_s
\tau_1, \alpha_2 =_s \tau_1\}$
                                & $V(\pi) = \{\alpha_1, \alpha_2, \tau_1\}$ if $\tau_1 \in \VV$         \\
                                & & $V(\pi) = \{\alpha_1, \alpha_2\}$ if $\tau_1
\notin \VV$     \\ 
\end{tabular}
\end{flushleft}

From the assumption that $(\sigma,\wt)$ is a solution for $(\Gamma, a_1 ~\match_{[\tau_1]}~ a_2)$ and $dom(\sigma) \cap V(\pi) = \varnothing$, we have
$\sigma\Gamma \vdash \sigma(a_1 ~\match_{[\tau_1]}~ a_2) : \wt$. This must be
derived from $\TMatch$, we know that $\sigma\Gamma \vdash \sigma a_1: \sigma\tau_1$ and
$\sigma\Gamma \vdash \sigma a_2 : \sigma\tau_1$. By the \ih, there are solutions
$(\sigma_1,\sigma\tau_1)$ for $(\Gamma, a_1, \alpha_1, \CC_1)$ and
$(\sigma_2,\sigma\tau_1)$ for $(\Gamma,a_2,\alpha_2,\CC_2)$. We must exhibit a substitution $\sigma'$ such that: (a) $\sigma'
\backslash V(\pi)$ agrees with $\sigma$; (b) $\sigma' \wt \sub_s \wt$; (c) $\sigma'$
satisfies $\CC_1$ and $\CC_2$; and (d) $\sigma'$ satisfies $\{\alpha_1 \sub_s
\tau_1, \alpha_2 =_s \tau_1\}$. Define $\sigma'' = \{\alpha_1 \mapsto s^g,
\alpha_2 \mapsto s^g\} \cup \sigma \cup \sigma_1 \cup \sigma_2$, where $s^g \in \caD$. Moreover, define $\sigma' =
\sigma'' \cup \{\tau_1 \mapsto s^g\}$ if $\tau_1 \in \VV$ and $\sigma' =
\sigma''$ otherwise. Conditions (a), (b), (c) and (d) are obviously satisfied. Thus, we
see that $(\sigma',\wt)$ is a solution for $(\Gamma, (a_1
~\match_{[\tau_1]}~ a_2), \wt, \CC)$.
\end{proof}

The constraint resolution algorithm always terminates. More formally:
\begin{thm}[Termination of algorithm] ~
\label{thm:termination}
\begin{enumerate}
\item \label{thm_i1} the algorithm halts, either by failing or by returning a
substitution, for all $\CC$;
\item \label{thm_i2}if the algorithm returns $\sigma$, then $\sigma$ is a solution for
$\CC$;
\end{enumerate} 
\end{thm}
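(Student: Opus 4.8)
The plan is to treat the two assertions in turn; the real work is a well-foundedness argument for~\ref{thm_i1} and a semantic invariant for~\ref{thm_i2}, with the error-detection rules of Fig.~\ref{fig:fail_rules} run interleaved with the resolution rules of Fig.~\ref{fig:constraint_rules} and an immediate halt occurring as soon as one of the former fires. For~\ref{thm_i1} I would attach to each constraint set $\CC$ the lexicographic measure $\mu(\CC) = \langle\, |\VC|,\ |\CC|\, \rangle \in \NN\times\NN$, where $|\CC|$ is the \emph{multiset} cardinality of $\CC$. The first observation is that no resolution rule introduces a fresh type variable: types are altered only by the substitutions $[\alpha\mapsto\tau]$ occurring in rules (4), (5), (9)--(12), and since there are no compound type constructors, $\tau$ is there either a ground decorated sort or a variable already present (the degenerate case $\tau=\alpha$ being caught by rules (1)/(2)), so applying $[\alpha\mapsto\tau]$ can only remove variables. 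Given this, a routine inspection shows that rules (4), (5), (9)--(14) each eliminate at least one type variable and hence strictly decrease the first component of $\mu$, while rules (1)--(3), (6), (7a), (7b), (8) do not increase $|\VC|$ and strictly decrease $|\CC|$ (they either discard one constraint or fuse two constraints into one). So $\mu$ strictly decreases at every step, and since the lexicographic order on $\NN\times\NN$ is well-founded the rewriting terminates; when it stops the algorithm is in one of the three situations listed just before the statement, i.e.\ it has halted either by failing or by returning a substitution.

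For~\ref{thm_i2} I would first establish, by an easy induction along a derivation $(\CC,\varnothing) = (\CC_0,\sigma_0) \Longrightarrow (\CC_1,\sigma_1) \Longrightarrow \cdots$, the two bookkeeping invariants that each $\sigma_k$ is idempotent and that $dom(\sigma_k)\cap\VV(\CC_k)=\varnothing$; these hold because every rule that extends the accumulator by a binding $\{\alpha\mapsto\tau\}$ simultaneously either applies $[\alpha\mapsto\tau]$ to the remaining constraints (rules (4), (5), (9)--(12)) or drops $\alpha$ from the constraints altogether under the side condition $\alpha\notin\VV(\CC')$ (rules (13), (14)), so an eliminated variable is never reintroduced. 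The heart of the proof is then the invariant, proved by induction on $k$: \emph{for every substitution $\rho$, if $\rho\models\CC_k$ and $\rho\,\sigma_k=\rho$, then $\rho\models\CC_0$}. The base case is trivial, and for the inductive step it is enough to check that for each rule $(\CC_k,\sigma_k)\Longrightarrow(\CC_{k+1},\sigma_{k+1})$, any $\rho$ with $\rho\models\CC_{k+1}$ and $\rho\,\sigma_{k+1}=\rho$ also satisfies $\rho\models\CC_k$ (that $\rho\,\sigma_k=\rho$ is immediate, since $\sigma_{k+1}\supseteq\sigma_k$ and $\alpha\notin dom(\sigma_k)$). This reduces to a fourteen-way case analysis: rules (1), (2) drop trivially true constraints; rule (3) drops a ground subtyping constraint valid in $\Gamma$; rules (4), (5), (13), (14) use the equality $\rho(\alpha)=\rho(\tau)$ forced by $\rho\,\sigma_{k+1}=\rho$ together with reflexivity of $\sub_s$ and the identity $\rho\models[\alpha\mapsto\tau]\CC' \iff \rho\models\CC'$; rules (6), (7a), (7b), (9)--(12) combine that identity with transitivity of $\sub_s$ and with the validity of their $\cdot\sub_s\cdot\in\Gamma$ side conditions; and rule (8) recovers the two subtyping constraints from the equality by reflexivity. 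Finally, if the algorithm returns $\sigma$, the terminal state is $(\varnothing,\sigma)$, so instantiating the invariant at $\rho:=\sigma$ — legitimate because $\sigma\models\varnothing$ trivially and $\sigma\,\sigma=\sigma$ by the idempotence invariant — gives $\sigma\models\CC_0=\CC$, which is exactly the claim.

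The step I expect to be the main obstacle is the bookkeeping around the accumulator $\sigma$: because the rules write $\{\alpha\mapsto\tau\}\cup\sigma$ rather than a genuine composition, the idempotence and domain-disjointness invariants really have to be argued, in particular that a variable eliminated by rule (4) or (9)--(12) cannot resurface in the range of a later binding — which rests squarely on the elimination discipline and on the absence of compound type constructors. The rest is mechanical once the right invariant is fixed; the only minor points to be careful about are taking $|\CC|$ as a multiset cardinality (so that rules such as (6) and (8), which fuse two constraints into one, are genuinely seen to decrease it) and using $\tau\neq\alpha$ in the substitution rules to guarantee $\alpha\notin\VV(\tau)$.
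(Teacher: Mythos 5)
Your part~\ref{thm_i1} argument is correct and essentially the same kind of argument as the paper's: you take the lexicographic measure $(|\VV(\CC)|,|\CC|)$ where the paper takes (number of constraints, number of subtyping constraints), and both measures strictly decrease at every resolution step (your claim that rules (4), (5), (9)--(14) always eliminate a variable is slightly too strong in the degenerate cases $\tau=\alpha$, but those cases still decrease the second component, so the well-foundedness argument survives). Nothing to object to there.

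The genuine gap is in part~\ref{thm_i2}, exactly at the point you yourself flagged as the main obstacle. The invariant ``each $\sigma_k$ is idempotent'' is false for the rules as written: rules (4), (5), (9)--(12) apply $[\alpha\mapsto\tau]$ to the remaining constraints $\CC'$ only, never to the ranges of the bindings already accumulated in $\sigma$, and rules (13), (14) do not touch $\sigma$ at all; so a variable bound at step $k$ may well occur in the range of a binding added at an earlier step. The paper's own Example~\ref{exmp:resolution} exhibits this: the returned substitution contains both $\alpha_{10}\mapsto\alpha_1$ and $\alpha_1\mapsto\ZZ^{\ell}$ (and both $\alpha_9\mapsto\alpha_2$ and $\alpha_2\mapsto\ZZ^{?}$), so $\sigma\sigma\neq\sigma$, and under a literal one-pass reading $\sigma$ does not even satisfy the constraint $\alpha_{10}=_s\alpha_1\in\CC$. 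Hence your final instantiation $\rho:=\sigma$ of the preservation invariant is not licensed; your justification (``an eliminated variable is never reintroduced'') only guarantees that a bound variable never reappears in $\CC_j$ for later $j$, which says nothing about the ranges of \emph{earlier} bindings in $\sigma$. The repair is standard but must be made explicit: either read $\{\alpha\mapsto\tau\}\cup\sigma$ as genuine composition, i.e.\ also apply $[\alpha\mapsto\tau]$ to the range of $\sigma$ (as in Pierce's unification algorithm, which the paper follows) and prove idempotence for that version; or keep the triangular $\sigma$, observe that the elimination discipline makes its binding graph acyclic (a bound variable never reoccurs in $\CC$, hence never in the range of a later binding), instantiate your invariant at the fixpoint $\sigma^{*}$ of iterating $\sigma$, and conclude $\sigma^{*}\models\CC$ --- which is the sense in which the theorem has to be read. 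Apart from this, your per-rule case analysis (reflexivity and transitivity of $\sub_s$ plus the $\Gamma$ side conditions) is sound, and your overall structure --- an invariant propagated by induction along the run --- is a detailed elaboration of the paper's one-line sketch ``by induction on the number of recursive calls''.
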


We can already sketch a proof of Theorem \ref{thm:termination} following Pierce
\cite{pierce:02}.
\begin{proof}
For part \ref{thm_i1}, define the {\em degree} of a constraint set $\CC$ to be
the pair $(m,n)$, where $m$ is the number of constraints in $\CC$ and $n$ is the number
of subtyping constraints in $\CC$. The algorithm terminates immediately (with
success in the case of an empty constraint set or failure for an equation involving two
different decorated sorts) or makes recursive calls to itself with
a constraint set of lexicographically smaller degree.

For part \ref{thm_i2}, by induction on the number of recursive calls in the
computation of the algorithm.
\end{proof}


\section{Conclusion}
\label{sec:conclusion}

In this paper we have presented a type system for the pattern matching
constructs of \tom. The system is composed of type
checking and type inference algorithms with subtyping over sorts. Since \tom
also implements associative pattern matching over variadic operators, we were 
interested in defining both a way to distinguish these from syntactic
operators and checking and inferring their types.

We have obtained the following: our type inference system is sound and complete
w.r.t. checking, showed by Theorems \ref{thm:completeness} and \ref{thm:soundness}. This
is the first step towards an effective implementation, thus leading to a safer
\tom.
However, we still need to investigate type unicity that we believe to hold under our assumptions of non-overloading and non-multiple inheritance.

As we have considered a subset of the \tom language, future work will 
focus on extending the type system to handle the other constructions of the
language such as anti-patterns \cite{KirchnerKM-2007,kopetz:08}. 
As a slightly more prospective research area, we also want parametric polymorphism over types for \tom: our type system will therefore have
to be able to handle that as well.  

\section*{Acknowledgements} We would like to acknowledge the numerous discussions we had
in the Protheo and Pareo teams, especially with Paul Brauner, on these topics during these
last years as well as the constructive and useful comments done by the anonymous
refereees.

\bibliography{typing}
\bibliographystyle{plain}

\end{document}